\renewcommand\labelenumi{(\roman{enumi})}
\renewcommand\theenumi\labelenumi
\newcommand \Tr {\mathop\mathrm{Tr}}
\newcommand \argmin {\mathop\mathrm{arg\,min}}
\mathchardef\mhyphen="2D 
\newcommand{\R}{{\mathbb R}}
\newcommand{\eps}{\varepsilon}
\renewcommand{\epsilon}{\varepsilon}
\newcommand{\rr}{{\mathbf{r}}}
\newcommand{\rmd}{\,\mathrm{d}}
\newcommand{\rme}{\mathrm{e}}
\newcommand{\proxf}{\Pi_{f}^\eps}
\newcommand{\rhogs}{\rho_\mathrm{gs}}
\newcommand{\subdiff}{\underline\partial}
\newcommand{\superdiff}{\overline\partial}
\newcommand{\id}{\mathrm{id}}
\newcommand{\vxc}{v_\mathrm{xc}}
\newcommand{\vH}{v_\mathrm{H}}
\newcommand{\vKS}{v_\mathrm{KS}}
\newcommand{\Href}{H_0^\mathrm{ref}}
\newcommand{\Fcnl}{\mathcal F}
\newcommand{\Fcnlp}{\mathcal G}
\newcommand{\EE}{E_\Fcnl}
\newcommand{\DHe}{D}
\newcommand{\CS}{\mathrm{CS}}
\newcommand{\dua}[2]{\langle #1, #2 \rangle}
\newcommand{\ext}{\mathrm{ext}}
\newcommand{\KS}{\mathrm{KS}}
\newtheorem{theorem}{Theorem}
\newtheorem{lemma}{Lemma}
\newtheorem{procedure}{Procedure}
\newtheorem{remark}{Remark}
\begin{document}

\title[Density-potential inversion from Moreau--Yosida regularization]{Density-potential inversion from Moreau--Yosida regularization\\\vspace{0.2cm}}

\author{Markus Penz}
\address{Basic Research Community for Physics, Innsbruck, Austria}

\author{Mihály A.\ Csirik}
\address{Department of Computer Science, Oslo Metropolitan University, Norway}
\address{Hylleraas Centre for Quantum Molecular Sciences, Department of Chemistry, University of Oslo, Norway}

\author{Andre Laestadius}
\email{andre.laestadius@oslomet.no}
\address{Department of Computer Science, Oslo Metropolitan University, Norway}
\address{Hylleraas Centre for Quantum Molecular Sciences, Department of Chemistry, University of Oslo, Norway}

\begin{abstract}
\vspace{.2cm}
For a quantum-mechanical many-electron system, given a density, the Zhao--Morrison--Parr method allows to compute the effective potential that yields precisely that density. 
In this work, we demonstrate how this and similar inversion procedures mathematically relate to the Moreau--Yosida regularization of density functionals on Banach spaces. It is shown that these inversion procedures can in fact be understood as a limit process as the regularization parameter approaches zero. 
This sheds new insight on the role of Moreau--Yosida regularization in density-functional theory and allows to systematically improve density-potential inversion. Our results apply to the Kohn--Sham setting with fractional occupation that determines an effective one-body potential that in turn reproduces an interacting density. 
\vspace{1.2cm}
\end{abstract}

\maketitle

\section{Introduction}

In cases where the one-body ground-state density $\rhogs$ of a quantum-mechanical many-electron system is known, the Zhao--Morrison--Parr (ZMP) method \cite{ZP1993,ZMP1994}, among others \cite{vanleeuwen1994exchange,kumar2019universal,Shi2021}, allows to determine the effective potential of a reference system that reproduces exactly that density. 
Such inverse-problem methods can either be used to directly match experimental data \cite{jayatilaka1998wave}, or to gain valuable insight into approximations in density-functional theory (DFT) \cite{vonBarth2004basic,burke2007abc,dreizler2012-book,eschrig2003-book}. 
The resulting density-potential map has been recently studied from a mathematical point of view in great detail \cite{garrigue2021some,garrigue2022building}.
In particular, these methods allow to determine the effective potential, $\vKS$, of a Kohn--Sham (KS) system~\cite{KS} of non-interacting particles and thus also the exchange-correlation potential, $\vxc$. Specialized methods for inversion in KS systems have also been devised \cite{wu2003direct,jensen2018numerical} and the time-dependent case recently received attention as well \cite{nielsen2018numerical}.
The ZMP method itself has been implemented and used on numerous occasions, mainly in the context of KS-DFT \cite{morrison1995-ZMP,tozer1996-ZMP,liu1999-ZMP,harbola2004-ZMP,varsano2014-ZMP,chauhan2017-ZMP,Nam-et-al-2021}.

In this work, we focus on methods for non-relativistic $N$-electron quantum systems.
Here, $v$ will be a one-body potential and $V$ the corresponding lifted $N$-body operator, i.e., $V= \sum_{j=1}^N v(\rr_j)$.  
Now take $\rhogs$ as a ground-state density of some Hamiltonian $H$ and $\Href$ a reference Hamiltonian. A motivating problem for our discussion is as follows: How does the  (effective) potential $v$ needs to be chosen such that $\Href + V$ has exactly the given ground-state density $\rhogs$?
For $H$ modeling an interacting system in an external potential $v_\ext$ and taking $\Href = -\frac{1}{2}\sum_{j=1}^N\nabla_j^2 =:K$, we have $v = \vKS = v_\ext + \vH + \vxc$, i.e., the KS potential that contains the external potential of $H$, the Hartree potential $\vH =\rhogs * |\cdot|^{-1}$, and the exchange-correlation potential. 
This effectively maps the interacting problem to a non-interacting one.
Contrary to that, in the case $H = \Href + V$, our result addresses the direct density-potential inversion.
\vfill\eject

The ZMP method was originally derived using a penalty parameter $\lambda$ with the density constraint appearing in terms of a Coulomb integral. For a fixed value $\lambda>0$ this method defines 
\begin{equation*}
    v^\lambda(\rr) = \lambda \int \frac{\rho^\lambda(\rr')-\rhogs(\rr')}{|\rr-\rr'|}\,\rmd\rr',
\end{equation*}
where $\rhogs$ is a ground-state density of the Hamiltonian $H$. The pair $(\rho^\lambda,v^\lambda)$ 
is determined self-consistently and the procedure is repeated for increasing values of $\lambda$. Finally, careful numerical extrapolation $\lambda\to\infty$ yields a potential $v^\lambda\to v$ that  has the ground-state density $\rhogs$, as required~\cite{ZMP1994}.

In the following, we will demonstrate how the ZMP and related methods can be justified in a mathematical precise manner by relying on the Moreau--Yosida regularization of a ({\it not} necessarily universal) density functional  on a suitably chosen function space.
The possibility of such a connection between the ZMP method and Moreau--Yosida regularization was already mentioned in Ref.~\cite{Kvaal2014}, but has not yet been made explicit.

\subsection{Outline}
In Section~\ref{zmporigsec}, we briefly review the ZMP method for the reader's convenience. Next, in Section~\ref{sec:my}, we discuss the necessary convex-analysis concepts and Moreau--Yosida regularization with applications to DFT in mind. We then formulate our rather general setting in Section~\ref{sec:M-setting}, to which we will apply the density-potential inversion procedure.
Our main results are discussed in Section~\ref{mainsec}, which contains the derivation of the ZMP method with an additional correction term (Theorem~\ref{thm:ZMP}). 
This is achieved by first formulating an abstract density-potential inversion procedure (Procedure~\ref{proc:ZMP} and also Procedure~\ref{proc:DIFF}) by exploiting a property of the derivative of the Moreau--Yosida regularized functional (Theorem~\ref{th:v-from-weak-limit}).
We then specialize the abstract procedure to obtain the ZMP method both on bounded (Theorem~\ref{thm:ZMP}), and on unbounded domains (Theorem~\ref{thm:ZMPyuk}). In Section~\ref{sec:num}, we describe some of the numerical experiments.
Section~\ref{sec:summary} summarizes the established results and gives an outlook to possible future work. 
The proofs of our results may be found in Section~\ref{sec:proofs}.
\vfill\null
\pagebreak

\section{Preliminaries} 

\subsection{The Zhao--Morrison--Parr method}\label{zmporigsec} 

The main idea of the ZMP method \cite{ZP1993,ZMP1994} is to rephrase the pointwise constraint $\rho(\rr)=\rhogs(\rr)$ to the equivalent
$D(\rho-\rhogs)=0$, where $D(\rho)$ denotes the Hartree energy of $\rho$ (see Eq.~\eqref{eq:Hartree-term}), and add it to the one-body energy with a penalty parameter $\lambda>0$. We then arrive at the minimization problem
\begin{equation*}
\begin{aligned}
\inf_{\langle \phi_j^\lambda,\phi_k^\lambda \rangle=\delta_{jk}} \Bigg[ & \frac{1}{2}\sum_{j=1}^N \int |\nabla\phi_j^\lambda(\rr)|^2\,\rmd\rr + \int v_\ext(\rr)\rho^\lambda(\rr)\,\rmd \rr \\
& + D(\rho^\lambda) + \lambda D(\rho^\lambda-\rhogs) \Bigg],
\end{aligned}
\end{equation*}
where we have set $\rho^\lambda(\rr)=\sum_{j=1}^N |\phi_j^\lambda(\rr)|^2$ and $v_\ext$ is the external potential. 
Requiring that the derivative of the above functional with respect to the orbitals $\phi_j^\lambda$ vanishes, we obtain the Kohn--Sham-type equations of the ZMP method \cite{ZMP1994}. These equations are then solved for $\lambda\to\infty$.
We may summarize the method as follows.\\

{\it Suppose that $\rhogs$ is a ground-state density for some interacting system with external potential $v_\ext$. Let $\rho^\lambda(\rr) = \sum_{j=1}^N |\phi_j^\lambda(\rr)|^2$, $\vH^\lambda(\rr) = (\rho^\lambda * |\cdot|^{-1})(\rr)$, and 
\begin{equation}\label{zmpeqs0}
    v^\lambda(\rr) = \lambda \int \frac{\rho^\lambda(\rr') - \rhogs(\rr')}{|\rr-\rr'|}\,\rmd \rr' ,
\end{equation}
where the $\phi_j^\lambda$ satisfy
 \begin{equation}\label{zmpeqs}
    \left[ -\frac 1 2 \nabla^2  
    + v_\ext(\rr) + \vH^\lambda(\rr) +v^{\lambda}(\rr)  \right] \phi_j^\lambda(\rr) = e_j^\lambda \phi_j^\lambda(\rr) .
\end{equation}
Then, formally
\begin{equation*}
\vxc(\rr) = \lim_{\lambda\to\infty} v^\lambda(\rr) \quad\mathit{and}\quad \rhogs(\rr) = \lim_{\lambda\to\infty} \rho^\lambda(\rr).
\end{equation*}
}

In Ref.~\cite{ZMP1994} it was reported that solving Eqs.~\eqref{zmpeqs0}-\eqref{zmpeqs} self-consistently as $\lambda\to\infty$, the procedure converges, although it might not be easy to extrapolate to $\lambda = \infty$.

\begin{remark}\leavevmode
\begin{enumerate}
\item We have included $\int v_\ext(\rr) \rho^\lambda(\rr) \,\rmd \rr$ and $D(\rho^\lambda)$ in the minimization problem such that $v^\lambda$ targets $\vxc$ \emph{only}, as can be seen from Eq.~\eqref{zmpeqs}. The Hartree potential in Eq.~\eqref{zmpeqs} is sometimes substituted for a Fermi--Amaldi term, as for example in Ref.~\cite{ZMP1994}.

\item The choice of $D(\cdot)$ as a penalty term in the minimization problem is rather \emph{ad hoc} at this point, and Ref.~\cite{ZP1993} suggested to use the $L^2$-norm instead. One can imagine that a wide variety of functionals should work and we will later show how the corresponding different ZMP schemes arise from different choices for the basic function spaces.

\item Note that $\lambda$ is here just a parameter introduced to penalize for densities differing too much from $\rhogs$. In our interpretation below, the regularization parameter $\varepsilon$ replaces the penalty parameter $\lambda$.
\end{enumerate}
\end{remark}

\subsection{Moreau--Yosida regularization in DFT}
\label{sec:my}

Differentiability of the exact universal (interacting or non-interacting) density functionals cannot be guaranteed, they are in fact \emph{everywhere} discontinuous in the standard formulation~\cite{Lammert2007}. A way to achieve differentiability of \emph{convex} functionals, introduced in the context of DFT in Ref.~\cite{Kvaal2014} (see Refs.~\cite{kvaal2022moreau,Helgaker2022} for recent reviews), is by Moreau--Yosida regularization. This program was extended by two of the present authors in Refs.~\cite{KSpaper2018,CDFT-paper} to different DFT settings, and was also used to show convergence of a modified KS iteration on finite lattices \cite{penz2019guaranteed,penz2020erratum}. 
In this section, we briefly review the basic properties of Moreau--Yosida regularization and their consequences for DFT.

We start in the very general setting of a Banach space $X$, containing all densities under consideration, that is reflexive and where both $X$ \emph{and} its dual $X^*$ are strictly convex. A normed space is strictly convex if the function $\rho\mapsto \|\rho\|^2$ is strictly convex. All Lebesgue spaces $L^p$, $1<p<\infty$, and in general every Hilbert space verify these assumptions, they are even uniformly convex. 
We note that the non-reflexive spaces $L^1$ and $L^\infty$ are excluded.

In the following, an important role will be played by the duality mapping $J$ that maps elements from $X$ to $X^*$ in a canonical way. It is defined as
\begin{equation*}
    J(\rho) = \{\xi \in X^* \mid \langle \xi,\rho \rangle = \|\rho\|_X^2 = \|\xi\|_{X^*}^2 \}.
\end{equation*}
The duality mapping is always homogeneous and in the given setting it is also single-valued and bijective \cite[Prop.~1.117]{Barbu-Precupanu}, so $J^{-1}: X^* \to X$ is well-defined.

The relevance of convex analysis to DFT was recognized very early \cite{Lieb1983}, and so
a series of definitions from this very useful field are in order. We refer the reader to various textbooks \cite{Barbu-Precupanu,bauschke2011convex,zalinescu2002convex} for more information. As mentioned above, exact density functionals
are (typically) not differentiable, however, a more general concept of differentiation is applicable.
The \emph{subdifferential} $\subdiff f : X\to \mathcal{P}(X^*)$ (mapping to the power set of $X^*$) of a convex functional $f:X\to\R\cup\{+\infty\}$ at $\rho\in X$ is defined as the set
\begin{equation*}
\subdiff f(\rho)=\{v \in X^* \mid \forall \sigma \in X : \langle v,\sigma-\rho \rangle \leq f(\sigma)-f(\rho) \}.
\end{equation*}
Intuitively, the convex set $\subdiff f(\rho)$ collects all the supporting hyperplanes of the graph of $f$ at the point
$(\rho,f(\rho))$. If $f$ happens to be continuously differentiable at a point $\rho$, then $\subdiff f(\rho)$ is the singleton set $\{f'(\rho)\}$.  If $\subdiff f(\rho)\neq\emptyset$, then we say that $f$ is \emph{subdifferentiable at $\rho$}.
Analogously, the notion of the \emph{superdifferential} $\superdiff f$ can be introduced for concave functions $f$ via
$\superdiff f=-\subdiff( -f)$. Similarly, whenever $\superdiff f(\rho)\neq\emptyset$, then we say that $f$ is \emph{superdifferentiable at $\rho$}.

The \emph{Legendre transform} $f^*:X^*\to\R\cup\{+\infty\}$ of a functional $f:X\to\R\cup\{+\infty\}$ is defined as
$f^*(v)=\sup\{ \langle v, \rho \rangle - f(\rho) \mid \rho\in X \}$. Similarly, the Legendre transform $g^* : X \to\R\cup\{+\infty\}$ of a functional 
$g:X^*\to\R\cup\{+\infty\}$ is defined as $g^*(v)=\sup\{ \langle v, \rho \rangle - g(v) \mid v\in X^* \}$.
The functionals $f^*$ and $g^*$ are always convex (even if $f$ and $g$ are not). 
The famous \emph{biconjugation theorem} says that whenever $f$ is proper, lower semicontinuous and convex, 
then $f^*$ is proper, (weak-$*$) lower semicontinuous and convex, and furthermore $(f^*)^*=f$.
This result is very important for the convex formulation of DFT.

The \emph{Moreau--Yosida regularization} $f^\epsilon:X\to\R$ with a fixed $\eps > 0$ of a convex, lower semicontinuous functional $f:X \to \R \cup \{+\infty\}$ is given by the lower envelope of the parabola $\rho \mapsto \frac{1}{2\eps} \|\rho\|_X^2$ tracing along the graph $(\rho,f(\rho))$. In a formula this means
\begin{equation}\label{eq:MY-def}
    f^\eps(\rho) = \inf_{\sigma\in X} \left[ f(\sigma)+\frac{1}{2\eps} \|\rho-\sigma\|_X^2 \right].
\end{equation}
Since the parabola is strictly convex and $X$ is reflexive, the above infimum is attained at a unique point. Consequently, the following definition of the \emph{proximal mapping} $\Pi_f^\eps : X\to X$ makes sense,
\begin{equation*}
    \Pi_f^\eps(\rho) = \argmin_{\sigma\in X} \left[ f(\sigma)+\frac{1}{2\eps} \|\rho-\sigma\|_X^2 \right].
\end{equation*}
Both the regularization of a functional and the proximal mapping are exemplified in Fig.~\ref{fig:MY}.

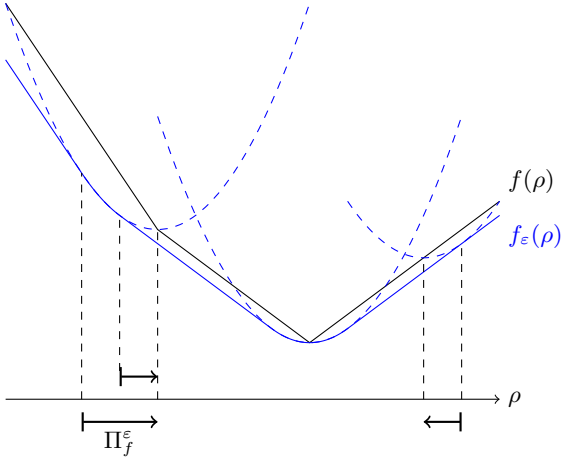
\begin{figure}[ht]
    \centering
    \begin{tikzpicture}[yscale=1.5]
        \draw[->] (-4, 0) -- (2.5, 0) node[right] {$\rho$};
        \draw[-] (2.5, 1.75) node[above right]{$f(\rho)$} -- (0, 0.5) -- (-2, 1.5) -- (-4, 3.5);
        \draw[domain=-2:2, smooth, dashed, variable=\x, blue] plot ({\x}, {1/2*\x*\x+0.5});
        \draw[domain=-0.5:0.5, smooth, variable=\x, blue] plot ({\x}, {1/2*\x*\x+0.5});
        \draw[domain=-4:0, smooth, dashed, variable=\x, blue] plot ({\x}, {1/2*(\x+2)*(\x+2)+1.5});
        \draw[domain=-3:-2.5, smooth, variable=\x, blue] plot ({\x}, {1/2*(\x+2)*(\x+2)+1.5});
        \draw[domain=0.5:2.5, smooth, dashed, variable=\x, blue] plot ({\x}, {1/2*(\x-1.5)*(\x-1.5)+1.25});
        \draw[-, blue] (-0.5, 0.625) -- (-2.5, 1.625);
        \draw[-, blue] (0.5, 0.625) -- (2.5, 1.625)  node[below right]{$f_\eps(\rho)$} ;
        \draw[-, blue] (-3, 2) -- (-4, 3);
        \draw[-, dashed] (-2, 0) -- (-2, 1.5);
        \draw[-, dashed] (-3, 0) -- (-3, 2);
        \draw[|->, thick] (-3, -0.2) -- node[below] {$\proxf$} (-2, -0.2);
        \draw[-, dashed] (-2.5, 0.25) -- (-2.5, 1.625);
        \draw[|->, thick] (-2.5, 0.2) -- (-2, 0.2);
        \draw[-, dashed] (1.5, 0) -- (1.5, 1.25);
        \draw[-, dashed] (2, 0) -- (2, 1.375);
        \draw[|->, thick] (2, -0.2) -- (1.5, -0.2);
    \end{tikzpicture}
    \caption{Visualization of the Moreau--Yosida regularization of a function $f:\R\to\R$ with regularization parabolas and three proximal mappings displayed.}
    \label{fig:MY}
\end{figure}
We now collect all the relevant properties of the Moreau--Yosida regularization, the proofs of which may be found in Refs.~\cite{Kvaal2014,KSpaper2018,Barbu-Precupanu,zalinescu2002convex}.

\begin{theorem}\label{regprop}
Let $X$ be a strictly convex, reflexive Banach space such that $X^*$ is also strictly convex. Let $f:X\to\R \cup \{+\infty\}$ be a convex, lower semicontinuous 
functional. Then the following properties hold true for the Moreau--Yosida-regularized functional $f^\eps : X\to\R$.

\begin{enumerate}
\item\label{MY-property-smoothing} (Smoothing) $f^\eps$ is convex, G\^ateaux differentiable (in particular continuous) and $f^\eps$ is finite everywhere in $X$.
\item (Domination) $\inf f\le f^\eps(\rho) \le f^\delta(\rho) \le f(\rho)$ for all $\rho\in X$ and $0\le \delta<\eps$. In particular, $\inf f^\eps=\inf f^\delta=\inf f$.
\item\label{MY-property-pw-conv} (Pointwise convergence) $f^\eps(\rho)\nearrow f(\rho)$ as $\eps\to 0$ for every $\rho\in X$.
\item\label{MY-property-subdiff} The subdifferential $\subdiff f^\eps : X\to \mathcal{P}(X^*)$ of $f^\eps$ is the singleton $\subdiff f^\eps(\rho)=\{ (f^\eps)'(\rho) \}$,
where $(f^\eps)'(\rho)$ is the G\^ateaux derivative of $f^\eps$ at $\rho\in X$.
\item The proximal mapping $\Pi_f^\eps$ is always single-valued and obeys the limit $\Pi_{f}^\eps(\rho)\to\rho$ as $\eps\to 0$.
\item\label{MY-property-Moreau} $(f^\eps)'(\rho)=\frac{1}{\eps}J(\rho-\Pi_{f}^\eps(\rho))$.
\item\label{MY-property-subdiff2} For any $\rho,\rho_\eps\in X$ the relation $\rho_\eps=\Pi_f^\eps(\rho)$ 
is equivalent to $\frac{1}{\eps}J(\rho-\rho_\eps)\in \subdiff f(\rho_\eps)$.
\item\label{MY-property-Legendre} (Legendre transform) The Legendre transform $(f^\epsilon)^*:X^*\to\R$ of $f^\epsilon$ is given by 
$(f^\epsilon)^*(v)=f^*(v) + \frac{\epsilon}{2}\|v\|_{X^*}^2$ for all $v\in X^*$.
\end{enumerate}
\end{theorem}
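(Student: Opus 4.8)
The plan is to establish the eight items roughly in the order (ii), (i), (vii), (iv)/(vi), (v)/(iii), (viii), since they build on one another; throughout, $f$ is understood to be proper, so that (by Hahn--Banach) it admits a continuous affine minorant $f(\sigma)\ge\langle\eta,\sigma\rangle-c$. Fixing $\rho\in X$, the auxiliary functional $g_\rho(\sigma)=f(\sigma)+\frac1{2\eps}\|\rho-\sigma\|_X^2$ of \eqref{eq:MY-def} is then proper, lower semicontinuous, strictly convex (convex plus a strictly convex term, the latter since $X$ is strictly convex), and coercive (the quadratic term dominates the affine minorant); reflexivity of $X$ and the direct method give a minimiser, strict convexity gives uniqueness, and this legitimises $f^\eps$ and the single-valued $\proxf$, i.e.\ the first claim of (v). Property (ii) is then immediate: $\sigma=\rho$ is admissible so $f^\eps(\rho)\le f(\rho)$; every summand of $g_\rho$ exceeds $\inf f$; for $\delta<\eps$ the bracket in \eqref{eq:MY-def} is pointwise larger, so $f^\eps\le f^\delta$; and chaining these pointwise bounds yields $\inf f^\eps=\inf f^\delta=\inf f$. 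For (i), $f^\eps$ is finite everywhere (above, by using any $\sigma_0\in\operatorname{dom}f$ as a test point; below, by $\inf g_\rho>-\infty$), convex as an infimal convolution of $f$ with $\frac1{2\eps}\|\cdot\|^2$, and continuous --- indeed locally Lipschitz --- via the elementary bound $f^\eps(\rho_1)-f^\eps(\rho_2)\le\frac1{2\eps}\big(\|\rho_1-\proxf\rho_2\|^2-\|\rho_2-\proxf\rho_2\|^2\big)$ together with a uniform bound on $\|\proxf\rho\|$ for $\rho$ in a bounded set (again from the affine minorant); its G\^ateaux differentiability is deferred until (iv) is available.

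For (vii) I would argue directly rather than invoke an abstract subdifferential sum rule. Write $\rho_\eps=\proxf(\rho)$. Perturbing the minimiser along $\rho_\eps+t(\sigma-\rho_\eps)$, using convexity of $f$ and G\^ateaux differentiability of $\|\cdot\|^2$ with derivative $2J(\cdot)$ --- available because the hypotheses make $X$ smooth and $J$ single-valued, as noted in the excerpt --- and letting $t\downarrow0$ gives $\frac1\eps J(\rho-\rho_\eps)\in\subdiff f(\rho_\eps)$. Conversely, adding $\frac1{2\eps}\|\rho-\sigma\|^2$ to that subgradient inequality and using $\langle J(a),c\rangle\le\frac12\|a\|^2+\frac12\|c\|^2$ shows $\rho_\eps$ minimises $g_\rho$, giving the equivalence. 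Writing $p=\proxf(\rho)$, I then establish $\subdiff f^\eps(\rho)=\{\frac1\eps J(\rho-p)\}$ by two short computations. For the inclusion ``$\subseteq$'': testing the subgradient inequality for $f^\eps$ at the shifted points $\sigma=z+p$ against $f^\eps(z+p)\le f(p)+\frac1{2\eps}\|z\|^2$ forces any subgradient to lie in $\subdiff\big(\frac1{2\eps}\|\cdot\|^2\big)(\rho-p)=\{\frac1\eps J(\rho-p)\}$. For ``$\supseteq$'': since $\frac1\eps J(\rho-p)\in\subdiff f(p)$ by (vii), for any $\sigma$ one gets $f^\eps(\sigma)\ge f(p)+\frac1\eps\langle J(\rho-p),\proxf\sigma-p\rangle+\frac1{2\eps}\|\sigma-\proxf\sigma\|^2$, and combining this with the subgradient inequality for $\frac12\|\cdot\|^2$ at $\rho-p$ yields $\frac1\eps J(\rho-p)\in\subdiff f^\eps(\rho)$. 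This proves (iv); and since $f^\eps$ is continuous and convex with a singleton subdifferential at every point, it is G\^ateaux differentiable with that derivative by the standard criterion, completing (i) and establishing (vi).

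It remains to handle (iii), the convergence half of (v), and (viii). From the affine minorant and $f^\eps(\rho)\le f(\sigma_0)+\frac1{2\eps}\|\rho-\sigma_0\|^2$ with $\sigma_0\in\operatorname{dom}f$ close to $\rho$, one derives a quadratic inequality for $\|\rho-\proxf\rho\|$ whose solution forces $\proxf\rho\to\rho$ as $\eps\to0$ (for $\rho$ in the closure of $\operatorname{dom}f$; for $\rho$ outside it, $f(\rho)=+\infty$ and a localisation estimate gives $f^\eps(\rho)\to+\infty$ directly). Given $\proxf\rho\to\rho$, lower semicontinuity of $f$ yields $f(\rho)\le\liminf_{\eps\to0}f(\proxf\rho)\le\liminf_{\eps\to0}f^\eps(\rho)\le f(\rho)$, and monotonicity from (ii) upgrades this to the monotone convergence $f^\eps(\rho)\nearrow f(\rho)$, i.e.\ (iii) (the case $f(\rho)=+\infty$ following from the same contradiction via lower semicontinuity). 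For (viii), conjugation sends the infimal convolution to a sum, $(f^\eps)^*=f^*+\big(\frac1{2\eps}\|\cdot\|^2\big)^*$, and a scaling computation together with the self-duality $\big(\frac12\|\cdot\|_X^2\big)^*=\frac12\|\cdot\|_{X^*}^2$ (Fenchel--Young, with equality realised on the duality map) gives $\big(\frac1{2\eps}\|\cdot\|^2\big)^*(v)=\frac\eps2\|v\|_{X^*}^2$.

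I expect the main obstacle to be the passage from ``singleton subdifferential at every point'' to genuine G\^ateaux differentiability of $f^\eps$ in a general (non-Hilbert) Banach space: one must first secure continuity of $f^\eps$, where the uniform control of $\|\proxf\rho\|$ over bounded sets of $\rho$ is the slightly delicate point, and then apply the differentiability criterion for continuous convex functions with single-valued subdifferential. A secondary, pervasive nuisance is the bookkeeping with the duality map --- in particular the identity $J=\subdiff\big(\frac12\|\cdot\|_X^2\big)$ and the single-valuedness of $J$ under the stated hypotheses --- since essentially every estimate above runs through the subgradient inequality for $\frac12\|\cdot\|^2$ and the bound $\langle J(a),b\rangle\le\frac12\|a\|^2+\frac12\|b\|^2$.
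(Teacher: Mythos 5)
Your argument is sound, but it is worth saying up front that the paper does not actually prove Theorem~\ref{regprop}: it collects these facts and refers the reader to Refs.~\cite{Kvaal2014,KSpaper2018,Barbu-Precupanu,zalinescu2002convex} (e.g.\ the maximal-monotone-operator machinery of Barbu--Precupanu). You instead give a self-contained, elementary proof, and the dependency graph you choose --- existence/uniqueness of $\proxf$ first, then (ii), (i) minus differentiability, then (vii) as the workhorse from which (iv) and (vi) follow, then (v), (iii), (viii) --- is a perfectly standard and correct way to organize it. I checked the two computations you only sketch: the converse direction of (vii) and the inclusion $\frac1\eps J(\rho-p)\in\subdiff f^\eps(\rho)$ both close exactly as you say, using $\langle J(a),c\rangle\le\tfrac12\|a\|^2+\tfrac12\|c\|^2$ after splitting $\sigma-\rho_\eps=(\rho-\rho_\eps)-(\rho-\sigma)$, and the reverse inclusion via shifted test points $\sigma=z+p$ correctly identifies $\eps\xi\in\subdiff(\tfrac12\|\cdot\|^2)(\rho-p)=\{J(\rho-p)\}$, where single-valuedness of $J$ uses smoothness of $X$, i.e.\ strict convexity of $X^*$ together with reflexivity. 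You also correctly identify the two genuinely delicate points: (a) passing from ``singleton subdifferential everywhere'' to G\^ateaux differentiability requires continuity of $f^\eps$, which your local Lipschitz bound (with the uniform bound on $\|\proxf\rho\|$ over bounded sets, coming from the affine minorant) supplies; and (b) item (v) as stated in the theorem is slightly too strong --- $\proxf(\rho)\to\rho$ can only hold for $\rho\in\overline{\operatorname{dom}f}$, since $\proxf(\rho)$ always lies in $\operatorname{dom}f$ --- and your parenthetical restriction is the correct fix, consistent with how the cited literature states the result. In short: your proof buys self-containedness at the cost of length; the paper's citation buys brevity at the cost of hiding exactly the two subtleties you surfaced.
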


Let us briefly discuss the consequences of these properties on a concrete example furnished by the convex and (weakly) lower semicontinuous \cite{Lieb1983}, but the everywhere discontinuous \cite{Lammert2007} (hence nowhere differentiable) \emph{Lieb functional} $F:X\to\R \cup \{+\infty\}$,
\begin{equation}\label{eq:F}
F(\rho) = \inf_{\Gamma \mapsto \rho} \Tr ((K+W)\Gamma), \qquad
W = \frac{1}{2}\sum_{j\neq k}|\rr_j - \rr_k|^{-1},
\end{equation}
on either the Hilbert space $X=L^2(\Lambda)\simeq X^*$, where $\Lambda\subset\R^3$ is a finite box \cite{Kvaal2014}, or, $X=L^3(\R^3)$, $X^*=L^{3/2}(\R^3)$. The Lieb functional may also be defined through the \emph{concave} ground-state energy $E:X^*\to\R$ via
Legendre transformation as
\begin{equation}\label{eevarlieb}
F(\rho)=(-E(-\cdot))^*(\rho)=\sup_{v\in X^*} \Big[ E(v) - \langle v, \rho \rangle \Big],
\end{equation}
which is sometimes referred to as the \emph{Lieb variational principle} \cite{Helgaker2022}.
As usual, the dual pairing between potential and density means 
$\langle v,\rho \rangle = \int v(\rr)\rho(\rr)\rmd\rr$.

Now, consider the Moreau--Yosida regularization $F^\epsilon$ of $F$. 
The \emph{regularized ground-state energy} $E^{\eps}$ (more specifically, the energy corresponding to the regularized functional) given by 
$-E^{\eps}(-\cdot)=(F^\eps)^*$ can then be expressed using property \eqref{MY-property-Legendre} above simply as
\begin{equation}\label{L2regene}
E^{\eps}(v)=E(v) - \frac{\eps}{2} \|v\|_{X^*}^2.
\end{equation}
The one-parameter family of convex and (G\^ateaux-) differentiable functionals $\{F^\eps\}_{\eps>0}$ converges pointwise and increasingly 
to the original, unregularized $F$ from below as $\eps\to 0$ by property \eqref{MY-property-pw-conv} above. The regularized ground-state energy $E^\eps(v)$ also 
converges to the true ground-state energy $E(v)$ according to Eq.~\eqref{L2regene}, and $E^\eps(v)$ is shifted from $E(v)$
by a calculable constant. Further, 
\begin{equation*}
\superdiff E^\eps(v)=\superdiff E(v) + \eps J(v),
\end{equation*}
since $J$ is just the derivative of the concave functional $-\frac{1}{2}\|v\|^2_{X^*}$.
This also means that the degeneracy of the ground state, manifested through multiple ground-state densities $\superdiff E(v)$, is retained exactly for the regularized case. These facts reflect the \emph{``lossless''}
character of the Moreau--Yosida-regularization, i.e., no information is lost in the regularization process. Note, however, that the Moreau--Yosida regularization of $F^\eps$ does \emph{not}
make $E^\eps$ differentiable. Nevertheless, it makes $E^\eps$ superdifferentiable everywhere, while in the unregularized case $E$ is superdifferentiable only at potentials $v\in X^*$ that support a ground state (a.k.a.\ binding potentials). In such a case,
$\superdiff E(v)$ is the set of densities coming from ensemble ground states associated to $v$.

By the biconjugation theorem, the well-known duality relation between $F$ and $E$~\cite{Lieb1983}, sometimes called the \emph{Hohenberg--Kohn variational principle},
\begin{equation}\label{eevarur}
E(v)=\inf_{\rho\in X} \Big[ F (\rho) + \langle v, \rho \rangle \Big],
\end{equation}
also holds in the regularized setting,
\begin{equation}\label{eevar}
E^\eps(v)=\inf_{\rho\in X} \Big[ F^\eps (\rho) + \langle v, \rho \rangle  \Big].
\end{equation}
According to Theorem~\ref{regprop}~\eqref{MY-property-smoothing}, $F^\eps$ is finite everywhere (i.e., even at densities which are not representable by a wavefunction), 
so the minimizer $\rho$ in Eq.~\eqref{eevar} might be unphysical: $\rho$  can be negative, might not integrate to $N$, or 
$\int |\nabla\sqrt{\rho}|^2 \, \rmd \rr$ might diverge (see \cite{Lieb1983}). This is in contrast to minimizers of Eq.~\eqref{eevarur}, which are of course always physical.

The way out from this apparent difficulty is the following.
We know from general convex-analysis considerations~\cite[Prop.~2.33]{Barbu-Precupanu} that $\rho$ is an optimizer in Eq.~\eqref{eevar} if and only if 
$-v\in\subdiff F^\eps(\rho)$, in other words (Theorem~\ref{regprop}~\eqref{MY-property-subdiff}), if and only if
\begin{equation*}
-v=(F^\eps)'(\rho).
\end{equation*}
Combining this with Theorem~\ref{regprop}~\eqref{MY-property-Moreau}, this is equivalent to 
(recall that we use the notation $\rho_\eps=\Pi_{F}^\eps(\rho)$)
\begin{equation*}
-v=\frac{1}{\eps}J(\rho-\rho_\eps).
\end{equation*}
Inverting this equation, we find that a minimizer $\rho$ of Eq.~\eqref{eevar} is always of the form 
\begin{equation}\label{proxdens}
\rho=\rho_\eps - \eps J^{-1}(v).
\end{equation}
We call $\rho_\eps$ the \emph{proximal density} of $\rho$. It is important to note that $\rho_\eps$ will always be physical:
$\rho_\eps\ge 0$, $\int\rho_\eps \,\rmd\rr=N$ and $\int |\nabla\sqrt{\rho_\eps}|^2 \,\rmd\rr<\infty$. This is because $\rho_\eps$ is indeed a minimizer of Eq.~\eqref{eevarur} by Theorem~\ref{regprop}~\eqref{MY-property-subdiff2}.
Again, we see the ``lossless'' character of Moreau--Yosida regularization: from a possibly unphysical minimizer $\rho$ (a.k.a.\ a \emph{quasidensity}) of Eq.~\eqref{eevar}, we can always reconstruct a unique physical density $\rho_\eps=\rho + \eps J^{-1}(v)$.
We can conclude that solving the original minimization problem Eq.~\eqref{eevarur} is completely equivalent to solving a
regularized problem Eq.~\eqref{eevar}. One simply needs to combine Eq.~\eqref{L2regene} and Eq.~\eqref{proxdens} to obtain the ground-state energy $E(v)$ and the respective ground-state density.

Similar considerations hold true for the \emph{(mixed state) kinetic-energy functional}
\begin{equation}\label{eq:T-func}
T(\rho)=\inf_{\Gamma \mapsto \rho} \Tr( K\Gamma),
\end{equation}
which is also convex and lower semicontinuous \cite{Lieb1983}, but discontinuous everywhere.
One obtains the regularized functional $T^\eps$ with equivalent properties as before. However, it is important to note at this point that the functional
\begin{equation*}
T_{\mathrm{S}}(\rho)=\inf_{\sum_{j=1}^N |\phi_j(\rr)|^2=\rho(\rr)\atop \langle \phi_j,\phi_k \rangle=\delta_{jk}} \frac{1}{2} \sum_{j=1}^N \int |\nabla\phi_j|^2\,\rmd\rr
\end{equation*}
used in conventional KS-DFT is \emph{not} convex \cite{Lieb1983}, so that the Moreau--Yosida regularization (or the convex-analysis approach) is not directly applicable in this case.
Therefore, Moreau--Yosida regularized KS methods will always involve fractional occupation numbers since one considers density matrices in Eq.~\eqref{eq:T-func}.
We note in passing, that in the case $\rho$ is 
(non-interacting) $v$-representable with a non-degenerate ground state, then $T(\rho) = T_\mathrm{S}(\rho)$. 

We are now ready to recognize that the ZMP minimization problem initially posed in Section~\ref{zmporigsec} really has the form of a Moreau--Yosida regularization, where the regularization parabola $\frac{1}{2\eps} \|\rho-\sigma\|_X^2$ in Eq.~\eqref{eq:MY-def} is given by $\lambda D(\rho-\rhogs)$, albeit with a non-convex functional $T_{\mathrm{S}} + v_\ext + D$. In order to show that this is not just a accidental similarity and to be able to really formulate the ZMP method rigorously in terms of a Moreau--Yosida regularization, we have to find the suitable setting with a \emph{convex} functional.

\subsection{Model setting}
\label{sec:M-setting}

In KS-DFT the universal density functional (introduced above in Eq.~\eqref{eq:F})
\[
F(\rho) = \inf_{\Gamma \mapsto \rho}\Tr ((K+W)\Gamma) , 
\]
is split into
\begin{equation}\label{eq:F-split}
F(\rho) = T(\rho) + D(\rho) + E_\mathrm{xc}(\rho),
\end{equation}
where
\begin{equation}\label{eq:Hartree-term}
   \DHe(\rho) = \DHe(\rho,\rho), \quad \DHe(\rho,\sigma) = \frac 1 2 \int\!\!\!\int \frac{\rho(\rr)\sigma(\rr')}{|\rr -\rr'|}\,\rmd\rr \,\rmd \rr',
\end{equation}
are the Hartree energy and Coulomb inner product.
The crucial object in Eq.~\eqref{eq:F-split} is the exchange-correlation energy $E_\mathrm{xc}$ that accounts for the non-classical contributions and that actually gets defined through Eq.~\eqref{eq:F-split}. $E_\mathrm{xc}$ has been referred to as ``nature's glue''~\cite{KurthPerdew2000}.

To set the stage, we will relate ground-state densities and their corresponding potentials through convex energy functionals.
First, we introduce the density functional $\Fcnl(\rho)$, given below as a constrained-search functional over density matrices, $\Fcnl_\CS(\rho)$, and an additional term, a convex (and lower semicontinous) functional $\Fcnlp(\rho)>-\infty$, i.e.,
\begin{equation}\label{eq:F-functional}
    \Fcnl(\rho)  =\Fcnl_\CS(\rho)  + \Fcnlp(\rho), \qquad  \Fcnl_\CS(\rho) = \inf_{\Gamma \mapsto \rho} \Tr (\Href\Gamma) .
\end{equation}
The flexibility provided by $\Fcnlp(\rho)$ is discussed in Remark~\ref{remark-setting}.
The functional $\Fcnl(\rho)$ is defined on the Banach space $X$ of densities and for $\Href \neq 0$ is set to $\Fcnl(\rho)=+\infty$ whenever $\rho$ does not represent a physical state.

\begin{remark}\label{remark-setting}\leavevmode
\begin{enumerate}

\item The functional $\Fcnlp$ will typically be non-linear and differentiable.

\item If $\Href$ contains only ``internal'' parts and further $\Fcnlp(\rho)=0$ then $\Fcnl(\rho)$ is indeed a universal density functional, like $\Fcnl(\rho) = T(\rho)$ or $\Fcnl(\rho) = F(\rho)$.

\item\label{remark-setting-item-KS} Relevant for the forthcoming discussion is the choice $\Href = K$ together with $\Fcnlp(\rho) = \langle v_\ext, \rho \rangle + \DHe(\rho)$, so that the potential from the inversion procedure is just the exchange-correlation potential. This will be further explained in Section~\ref{sec:abstract-dens-pot-inv} below.

\item Another possibility is to follow Ref.~\cite{kumar2019universal} and set $\Href = 0$ and for $\Fcnlp(\rho)$ use, e.g., one 
of the functionals
$$
D(\rho), \qquad \int \rho^{1 + \alpha}\,\rmd\rr,  \qquad \frac 1 2\int |\nabla \sqrt{\rho}|^2\,\rmd\rr,
$$
where $\alpha>0$.
These functionals are all convex (and lower semicontinuous).
\end{enumerate}
\end{remark}

The density functional $\Fcnl$ can then be used to compute the (reference) ground-state energy $\EE(v)$ of a system described by $\Fcnl$ ($\Href$ and $\Fcnlp$) with an additional potential $v$,
\begin{equation}\label{eq:E-functional}
    \EE(v) = \inf_{\rho\in X} \Big[ \Fcnl(\rho) + \langle v,\rho \rangle \Big] .
\end{equation}
 Apart from a difference in sign, $\EE(v)$ is the convex conjugate of $\Fcnl(\rho)$, and as such, a concave function. We can also revert back from $\EE(v)$ to $\Fcnl(\rho)$ by duality as
\begin{equation}\label{eq:F-convex-conjugate}
    \Fcnl(\rho) = \sup_{v\in X^*} \Big[ \EE(v) - \langle v,\rho \rangle \Big],
\end{equation}
just like in Eq.~\eqref{eevarlieb}.
The condition for a density being a minimizer in Eq.~\eqref{eq:E-functional} can be equivalently rephrased as (see Section~\ref{sec:my}) 
\begin{equation}\label{eq:subgradient-condition-F}
    -v \in \subdiff \Fcnl(\rho).
\end{equation}
This equation already constitutes a density-potential map and will serve as the starting point for the development of a more practical method. The inversion of this relation, mapping from potentials to densities, entails solving the corresponding KS-type (SCF-) equation with potential $v$ and determining the ground-state density (or densities, in case of degeneracy). By the reciprocity theorem \cite[Prop.~2.33]{Barbu-Precupanu}, this amounts to the superdifferential of the convex conjugate functional $\EE$,
\begin{equation*}
    \rho \in \superdiff \EE(v),
\end{equation*}
which is just the condition for $v$ being a maximizer in Eq.~\eqref{eq:F-convex-conjugate}.
In order to establish a relation to the notion of \emph{$v$-representability}, we consider the case when $\Fcnlp$ is differentiable, such that
\begin{equation}\label{eq:subdiff-split}
    \subdiff \Fcnl = \subdiff \Fcnl_\CS + \Fcnlp'.
\end{equation}
For a potential $-v \in \subdiff \Fcnl$ the difference to $\subdiff \Fcnl_\CS$ just appears as a fixed shift $+ \Fcnlp'$ and the cardinality of the set $\subdiff \Fcnl_\CS$ is the same for $\subdiff\Fcnl$.
This allows to discern three qualitatively different cases when it comes to solutions of Eq.~\eqref{eq:subgradient-condition-F}:

\begin{enumerate}
\item The subdifferential is empty when $\rho$ is not $v$-representable with some Hamiltonian, 
which in this case means $v$-representability with Hamiltonian $\Href+V$.

\item The subdifferential only contains a single element if $\rho$ is $v$-representable and if we are in a setting where the Hohenberg--Kohn theorem holds \cite{reviewPartI}.

\item Finally, the subdifferential can also contain multiple elements in cases where the density is \emph{non-uniquely} $v$-representable, a situation recently discovered in finite lattice system with ground-state degeneracy \cite{penz2021graphDFT}. 
\end{enumerate}

Now, returning to the general case, the situation would be much simpler if the functional $\Fcnl(\rho)$ could be assumed to be functionally (i.e. G\^ateaux-) differentiable, since then the subdifferential always contains exactly one element (the functional derivative). As we discussed above, this is typically not the case, and in particular not the case for exact functionals. 
To remedy the situation, we recall that the desired differentiability do hold for the regularized functionals.
It will be shown below that by appropriately combining these ideas, Moreau--Yosida regularization serves as a basis for the rigorous reformulation of ZMP and related methods.

\section{Main results}
\label{mainsec}

\subsection{Abstract density-potential inversion}\label{sec:abstract-dens-pot-inv}

The main concrete example for the forthcoming discussion is the \emph{(fractional occupation number) Kohn--Sham} scheme, which we choose to model with the functional
\begin{equation*}
\Fcnl_\KS(\rho) = T(\rho) + \langle v_\ext, \rho\rangle + \DHe(\rho)
\end{equation*}
that was already mentioned in Remark~\ref{remark-setting}~\eqref{remark-setting-item-KS}.
The functional $\Fcnl_\KS$ is convex and lower semicontinuous for every $v_\ext\in X^*$. In the KS setting, $\rho\in X$ is \emph{noninteracting} $v$-representable if and only if $\subdiff\Fcnl_\KS(\rho)\neq\emptyset$. 
Note that we have chosen the functional $\Fcnl_\KS$ to be such that only the exchange-correlation energy is left out as the remaining, unknown contribution. 
Then, comparing with Eq.~\eqref{eq:subdiff-split} above, the set of KS potentials reads
\begin{equation*}
    \mathcal{V}_\KS(\rho) := 
    \{ v_\ext + \vH^\rho + \vxc^\rho \mid \vxc^\rho \in -\subdiff\Fcnl_\KS(\rho)\} .
\end{equation*}
This is because $( \langle v_\ext, \cdot\rangle)' \equiv v_\ext$ and $D'(\rho)$ is just the Hartree potential.
Using this language, the density-potential inversion consists of finding a representative of $-\subdiff\Fcnl_\KS(\rhogs)$, where $\rhogs$ is a ground-state of the interacting problem. The following fundamental result says that it is always possible to find this potential as the limit of the functional derivative $-(\Fcnl_\KS^\epsilon)'(\rho)$ of the \emph{regularized functional} $\Fcnl_\KS^\epsilon(\rho)$ as $\epsilon\to 0$. We state the result in its full generality, since we want to apply it in different situations, i.e., not just for $\Fcnl_\KS$.

\begin{theorem}\label{th:v-from-weak-limit}
Suppose that $X$ is a strictly convex Banach space and $X^*$ a uniformly convex one.
Let $\Fcnl:X \to \R \cup \{+\infty\}$ be a convex, lower semicontinuous functional. 
Let $\rhogs \in X$ be such that the subdifferential is nonempty, $\subdiff \Fcnl(\rhogs)\neq\emptyset$. Setting $\rho^\epsilon:=\Pi_\Fcnl^\epsilon(\rhogs)$ as before, we have that the (strong) limit of the sequence
\begin{equation*}
     -(\Fcnl^\eps)'(\rhogs) = \frac{1}{\epsilon} J(\rho^\epsilon - \rhogs) \in -\subdiff \Fcnl(\rho^\epsilon) \quad (\mathrm{as}\;\eps \to 0)
\end{equation*}
in $X^*$ is the unique element $v\in -\subdiff \Fcnl(\rhogs) \subset X^*$ \emph{with minimal norm}.
\end{theorem}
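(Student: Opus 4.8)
The inclusion $\tfrac1\epsilon J(\rho^\epsilon-\rhogs)\in-\subdiff\Fcnl(\rho^\epsilon)$ needs no work: writing $v^\epsilon:=\tfrac1\epsilon J(\rho^\epsilon-\rhogs)$, it is precisely Theorem~\ref{regprop}~\eqref{MY-property-Moreau} (which gives $v^\epsilon=-(\Fcnl^\epsilon)'(\rhogs)$) together with \eqref{MY-property-subdiff2} (applied with $\rho=\rhogs$, $\rho_\epsilon=\rho^\epsilon$, giving $-v^\epsilon\in\subdiff\Fcnl(\rho^\epsilon)$), so I would only record it. From the defining property of the duality map one also has $\|v^\epsilon\|_{X^*}=\tfrac1\epsilon\|\rho^\epsilon-\rhogs\|_X$ and $\langle v^\epsilon,\rho^\epsilon-\rhogs\rangle=\tfrac1\epsilon\|\rho^\epsilon-\rhogs\|_X^2$. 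The substance is the convergence claim, and my plan has three steps.

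\emph{A uniform bound.} I would first show $\|v^\epsilon\|_{X^*}\le\|w\|_{X^*}$ for \emph{every} $w\in-\subdiff\Fcnl(\rhogs)$. Fixing such a $w$ (so $-w\in\subdiff\Fcnl(\rhogs)$), monotonicity of the subdifferential applied to $-v^\epsilon\in\subdiff\Fcnl(\rho^\epsilon)$ and $-w\in\subdiff\Fcnl(\rhogs)$ gives $\langle w-v^\epsilon,\rho^\epsilon-\rhogs\rangle\ge0$, i.e. $\tfrac1\epsilon\|\rho^\epsilon-\rhogs\|_X^2=\langle v^\epsilon,\rho^\epsilon-\rhogs\rangle\le\langle w,\rho^\epsilon-\rhogs\rangle\le\|w\|_{X^*}\|\rho^\epsilon-\rhogs\|_X$; dividing by $\|\rho^\epsilon-\rhogs\|_X$ (trivial if it vanishes) yields the bound. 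In particular $\{v^\epsilon\}_{\epsilon>0}$ is bounded in $X^*$, and since $\|\rho^\epsilon-\rhogs\|_X=\epsilon\|v^\epsilon\|_{X^*}$ we recover $\rho^\epsilon\to\rhogs$ strongly in $X$ as $\epsilon\to0$.

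\emph{Identifying the weak limit.} The minimal-norm element $v$ is well defined: $-\subdiff\Fcnl(\rhogs)$ is nonempty (hypothesis), convex and weak-$*$ — hence, by reflexivity of $X^*$, weakly — closed, so the norm attains a minimum on it, and strict convexity of $X^*$ makes the minimizer unique (it is the metric projection of $0$ onto that set). Given any $\epsilon_n\to0$, the bound lets me extract a subsequence with $v^{\epsilon_n}\rightharpoonup v^*$ in $X^*$. I would then pass to the limit in $\langle -v^{\epsilon_n},\sigma-\rho^{\epsilon_n}\rangle\le\Fcnl(\sigma)-\Fcnl(\rho^{\epsilon_n})$ (for arbitrary $\sigma\in X$): splitting $\sigma-\rho^{\epsilon_n}=(\sigma-\rhogs)+(\rhogs-\rho^{\epsilon_n})$, the left side converges to $\langle -v^*,\sigma-\rhogs\rangle$ by weak convergence of $v^{\epsilon_n}$ and because $\|v^{\epsilon_n}\|_{X^*}\|\rhogs-\rho^{\epsilon_n}\|_X\to0$, while $\liminf_n\Fcnl(\rho^{\epsilon_n})\ge\Fcnl(\rhogs)$ by lower semicontinuity; this gives $\langle -v^*,\sigma-\rhogs\rangle\le\Fcnl(\sigma)-\Fcnl(\rhogs)$, i.e. $v^*\in-\subdiff\Fcnl(\rhogs)$. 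Weak lower semicontinuity of the norm and the bound give $\|v^*\|_{X^*}\le\liminf_n\|v^{\epsilon_n}\|_{X^*}\le\|v\|_{X^*}$, so $v^*=v$ by minimality and uniqueness. Since every weakly convergent subsequence has limit $v$ and $X^*$ is reflexive, $v^\epsilon\rightharpoonup v$ as $\epsilon\to0$.

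\emph{Upgrading to strong convergence, and the obstacle.} Combining $\limsup_{\epsilon\to0}\|v^\epsilon\|_{X^*}\le\|v\|_{X^*}$ (from the bound) with $\|v\|_{X^*}\le\liminf_{\epsilon\to0}\|v^\epsilon\|_{X^*}$ (weak lower semicontinuity) gives $\|v^\epsilon\|_{X^*}\to\|v\|_{X^*}$; since $X^*$ is uniformly convex, weak convergence plus convergence of norms forces strong convergence (Radon--Riesz property), so $v^\epsilon\to v$ in $X^*$, which is the assertion. The one genuinely delicate point is the limiting argument in the previous paragraph — the demiclosedness of the graph of $\subdiff\Fcnl$ under strong convergence of base points and weak convergence of subgradients — and it is exactly there that the three standing hypotheses (strong convergence $\rho^\epsilon\to\rhogs$, the uniform norm bound on $v^\epsilon$, and lower semicontinuity of $\Fcnl$) are used together; everything else is bookkeeping with the identities for $J$ and with monotonicity. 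Abstractly, $v^\epsilon$ is, up to sign, the Yosida approximation of the maximal monotone operator $\subdiff\Fcnl$ at $\rhogs$ and the claim is its convergence to the minimal section, so an alternative to this self-contained route is to invoke the corresponding result in \cite{Barbu-Precupanu}.
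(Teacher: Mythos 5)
Your proof is correct; every step checks out (the sign bookkeeping in the monotonicity estimate, the finiteness of $\Fcnl(\rho^\eps)$ and $\Fcnl(\rhogs)$ needed to pass to the limit in the subgradient inequality, and the Radon--Riesz upgrade are all handled properly). The difference from the paper is one of packaging rather than of substance: the paper proves only the set membership $\frac{1}{\eps}J(\rhogs-\rho^\eps)\in\subdiff\Fcnl(\rho^\eps)$ by hand (from the first-order optimality condition for the proximal minimization, exactly as your appeal to Theorem~\ref{regprop} does) and then delegates the entire convergence statement to the theory of maximal monotone operators --- specifically to the result on convergence of the Yosida approximation $A_\lambda x$ to the minimal section $A^0x$ in \cite[Prop.~1.146~(iv)]{Barbu-Precupanu}, together with \cite[Cor.~5.1.19]{megginson-book} for existence and uniqueness of the minimal-norm element. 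You instead reprove that cited result from scratch in the special case $A=\subdiff\Fcnl$: your uniform bound, the demiclosedness argument identifying the weak limit, and the norm-convergence step are precisely the ingredients of the standard proof of that proposition. What your route buys is a self-contained argument that makes visible where each hypothesis enters (lower semicontinuity for demiclosedness, strict convexity of $X^*$ for uniqueness of the minimal-norm element, uniform convexity for the strong limit); what the paper's route buys is brevity and the slightly greater generality of the lemma it states (weak convergence under mere reflexivity, strong convergence under uniform convexity). You even note the citation route as an alternative at the end, so the two proofs are consciously equivalent.
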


The proof can be found in an even more general form in Section~\ref{sec:proofs}.
Recall that $(\Fcnl^\eps)'(\rhogs)$ always exists due to to regularization. 
To implement the procedure suggested by the theorem, one needs to determine the proximal density $\rho^\epsilon$ and $v^\epsilon:=\frac{1}{\epsilon} J(\rho^\epsilon - \rhogs)\in -\subdiff \Fcnl(\rho^\epsilon)$, and let $\epsilon\to 0$. Of course, in general, neither step is trivial.

Here, the relation $v^\epsilon\in-\subdiff \Fcnl(\rho^\epsilon)$ is equivalent to $\rho^\epsilon\in \superdiff E_\Fcnl(v^\epsilon)$ by the reciprocity relation, where $E_\Fcnl(v)=\inf_\rho[\Fcnl(\rho) + \dua{v}{\rho}]$. This step avoids the necessity to calculate the proximal mapping. For instance, if $\Fcnl=\Fcnl_\KS$, then solving $\rho^\epsilon\in \superdiff E_{\Fcnl_\KS}(v^\epsilon)$ for $\rho^\epsilon$ amounts to solving the (fractional occupation number) Kohn--Sham equations with $v^\epsilon$ in place of the exchange-correlation potential. We return to this case in more detail below.

All this suggests the following \emph{abstract self-consistent scheme}. Suppose that besides the ground-state density $\rhogs\in X$, the initial $v_0^\epsilon$ and $\rho_0^\epsilon$ are given. In a general step we compute $\rho_{i}^\epsilon \in \superdiff E_\Fcnl(v_i^\epsilon)$ and then update the potential via
\begin{equation}\label{potupdate}
v_{i+1}^\epsilon:=\frac{1}{\epsilon} J(\rho_{i}^\epsilon - \rhogs).
\end{equation}
When sufficiently converged, this is repeated for smaller values of $\eps$ to facilitate the limit $\eps\to 0$ in the form of an extrapolation.

We summarize the above considerations in a form of an algorithm.
We use mixing to facilitate convergence, like in the optimal-damping algorithm for electronic-structure calculations \cite{cances2000ODA}.

\begin{procedure}[abstract density-potential inversion algorithm]\label{proc:ZMP}
Fix a strictly decreasing sequence $\eps_k \to 0$ and a mixing parameter $0< \mu <1$.
For a given $v$-representable $\rhogs \in X$, choose corresponding initial values $v_0^k$.
In each $i$-iteration determine the ground-state density $\rho_i^k\in \superdiff E_\Fcnl(v_i^k)$ and calculate the succeeding potential as
\begin{equation}\label{eq:procedure-step-ZMP}
    v_{i+1}^k = (1-\mu)v_i^k + \frac{\mu}{\eps_k} J(\rho_i^k - \rhogs).
\end{equation}
Switch to the next $k$-iteration when $v_i^k$ is sufficiently converged to some $v^k$ and finally let $v^k \to v$ as $k\to\infty$.
\end{procedure}

Here, the initial value $v_0^k$ for the effective potential may be chosen to be the zero potential or the result from the previous $k$-iteration.

It must be noted that in general the convergence of the given procedure is \emph{not} assured (see Section~\ref{sec:num} for a numerical test and some remarks about the choices on $\mu$ and the $\eps$-sequence). Especially the occurrence of degeneracies---when $\superdiff E_\Fcnl(v)$ includes multiple elements to choose from---will be problematic. Yet the potentials that produce such degeneracies are expected to be rare in potential space and thus they are unlikely to be hit by an iteration step $v_i^k$.
Procedure~\ref{proc:ZMP} can be simplified by letting the mixing parameter $\mu$ go to zero, while keeping $\alpha:=\mu\eps^{-1}_k$ constant, thereby dropping the extrapolation steps in $k$. This gives an alternative procedure.

\begin{procedure}[simplified abstract density-potential inversion algorithm]\label{proc:DIFF}
Fix a parameter $0<\alpha<1$. For a given $v$-representable $\rhogs \in X$ choose an initial value for the potential $v_0$. In each iteration step determine the ground-state density $\rho_i \in \superdiff E_\Fcnl(v_i)$ and calculate the succeeding potential as
\begin{equation*}
    v_{i+1} = v_i + \alpha J(\rho_i - \rhogs).
\end{equation*}
Stop when $v_i$ is sufficiently converged.
\end{procedure}

Clearly, the choice of function space $X$ determines the duality mapping $J$ that is crucial for concrete realizations and we now explore a few possibilities that will include two variants of the ZMP method as special cases.

\subsection{Interpretation of the Zhao--Morrison--Parr method}
\label{sec:interZMP}

We already presented the original ZMP approach to density-potential inversion in Section~\ref{zmporigsec}. It is now fairly straightforward to interpret the method in a rigorous way as the density-potential inversion following Eq.~\eqref{potupdate}. The two procedures developed for abstract density-potential inversion would then already serve as possible refinements. Additionally, the discussed cases also add correction terms to the usual ZMP method.

\begin{theorem}[ZMP method on bounded domains]\label{thm:ZMP}
Let $\Omega \subset \R^3$ be a bounded domain and set $X=H^{-1}(\Omega)$, $X^*=H_0^1(\Omega)$. Then $J^{-1} = -\Delta_{\mathrm{D}}$, the Dirichlet--Laplace operator, and $J$ means solving Poisson's equation with zero boundary conditions on the domain $\Omega$, which gives just the Hartree potential with a boundary term.
Then the potential update step Eq.~\eqref{potupdate} reads
\begin{equation}\label{zmpupdate}
    v_{i+1}^\eps (\rr) = \frac{1}{4\pi\eps} \int_\Omega \frac{\rho_i^\eps(\rr')-\rhogs(\rr')}{|\rr-\rr'|}\,\rmd\rr' - \frac 1 \eps g_i^{\eps,\Omega}(\rr),
\end{equation}
where
\begin{equation*}
    g_i^{\eps,\Omega}(\rr) = \int_\Omega \phi_\Omega^\rr(\rr')\left(\rho_i^\eps(\rr')-\rhogs(\rr')\right)\,\rmd\rr'.
\end{equation*}
Here, $\Delta \phi_\Omega^\rr=0$ on $\Omega$ and $\phi_\Omega^\rr(\rr')=(4\pi|\rr'-\rr|)^{-1}$ on $\partial\Omega$. For every $\eps > 0$, this can be solved self-consistently with $\rho_i^\eps \in \superdiff E_\Fcnl(v_i^\eps)$ and extrapolated to $\eps\to 0$. For $\Fcnl=\Fcnl_\KS$ and assuming convergence of $v_i^\eps \to v^\eps$ in the limit $i\to \infty$, the potentials $v^\eps$ then give the exchange-correlation potential as its (strong) limit, $\vxc = \lim_{\eps \to 0 } v^\eps$.
\end{theorem}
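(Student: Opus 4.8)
\emph{Proof strategy.} The plan is to make the abstract objects $J$, $\Pi_\Fcnl^\eps$ and $\superdiff E_\Fcnl$ concrete for the pair $X=H^{-1}(\Omega)$, $X^*=H_0^1(\Omega)$ and then to quote Theorem~\ref{th:v-from-weak-limit}. First I would verify that this pair meets the standing hypotheses: equipping $H_0^1(\Omega)$ with the inner product $\int_\Omega\nabla u\cdot\nabla w\,\rmd\rr$ — equivalent to the full $H^1$-norm on the bounded $\Omega$ by the Poincar\'e inequality — makes both $H_0^1(\Omega)$ and its dual $H^{-1}(\Omega)$ Hilbert spaces, hence reflexive and uniformly (a fortiori strictly) convex. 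I would also record that $\Fcnl_\KS=T+\langle v_\ext,\cdot\rangle+D$ is proper, convex and lower semicontinuous on $X$: this is classical for $T$ and for $D$, and $\langle v_\ext,\cdot\rangle$ is continuous on $X$ precisely because $v_\ext\in X^*=H_0^1(\Omega)$.

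Next I would identify the duality mapping. On a Hilbert space the duality mapping is the (inverse) Riesz map, and the Riesz isomorphism of $H_0^1(\Omega)$ with the gradient inner product onto $H^{-1}(\Omega)$ is $u\mapsto-\Delta u$ with homogeneous Dirichlet data, i.e.\ $-\Delta_{\mathrm{D}}$; hence $J=(-\Delta_{\mathrm{D}})^{-1}$, so $J^{-1}=-\Delta_{\mathrm{D}}$, and $J\rho$ is the solution $u\in H_0^1(\Omega)$ of Poisson's equation $-\Delta u=\rho$ in $\Omega$ with $u|_{\partial\Omega}=0$. Representing this solution through the Dirichlet Green's function $G_\Omega(\rr,\rr')=(4\pi|\rr-\rr'|)^{-1}-\phi_\Omega^\rr(\rr')$, with $\phi_\Omega^\rr$ the harmonic extension of $(4\pi|\cdot-\rr|)^{-1}$ from $\partial\Omega$ into $\Omega$, and substituting $\rho=\rho_i^\eps-\rhogs$ into the update $v_{i+1}^\eps=\tfrac1\eps J(\rho_i^\eps-\rhogs)$ of Eq.~\eqref{potupdate}, I would read off Eq.~\eqref{zmpupdate} together with the stated boundary correction $g_i^{\eps,\Omega}$. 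This step is a routine computation once $J$ is in hand.

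It then remains to pass through the two limits. For the inner iteration with $\eps$ fixed, if $v_i^\eps\to v^\eps$ and $\rho_i^\eps\to\rho_\star^\eps$, then at the fixed point $\rho_\star^\eps\in\superdiff E_\Fcnl(v^\eps)$ and $v^\eps=\tfrac1\eps J(\rho_\star^\eps-\rhogs)$ hold simultaneously. By the reciprocity relation the first reads $-v^\eps\in\subdiff\Fcnl(\rho_\star^\eps)$, i.e.\ $\tfrac1\eps J(\rhogs-\rho_\star^\eps)\in\subdiff\Fcnl(\rho_\star^\eps)$ using $J(-\sigma)=-J(\sigma)$, which by Theorem~\ref{regprop}~\eqref{MY-property-subdiff2} is precisely $\rho_\star^\eps=\Pi_\Fcnl^\eps(\rhogs)=:\rho^\eps$. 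Thus the converged potential is $v^\eps=\tfrac1\eps J(\rho^\eps-\rhogs)=-(\Fcnl^\eps)'(\rhogs)$ by Theorem~\ref{regprop}~\eqref{MY-property-Moreau}. Applying Theorem~\ref{th:v-from-weak-limit} with $\Fcnl=\Fcnl_\KS$ — legitimate because a $v$-representable $\rhogs$ has $\subdiff\Fcnl_\KS(\rhogs)\neq\emptyset$ — yields strong convergence $v^\eps\to v$ in $X^*=H_0^1(\Omega)$ as $\eps\to0$, with $v$ the minimal-norm element of $-\subdiff\Fcnl_\KS(\rhogs)$. Finally, by Eq.~\eqref{eq:subdiff-split} and the identification of $\mathcal V_\KS(\rhogs)$ in Section~\ref{sec:abstract-dens-pot-inv}, $-\subdiff\Fcnl_\KS(\rhogs)$ is exactly the set of admissible exchange-correlation potentials at $\rhogs$, a singleton when the KS ground state at $\rhogs$ is non-degenerate, so the limit is $\vxc$ (the minimal-norm representative otherwise).

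The substantive analytic content — strong convergence of $-(\Fcnl^\eps)'(\rhogs)$ to the minimal-norm subgradient — is carried by Theorem~\ref{th:v-from-weak-limit}, so the present statement is essentially its translation into the $H^{-1}/H_0^1$ language. The main work is therefore not a hard estimate but the verification that the abstract hypotheses genuinely hold in this realization: that $D$, built from the \emph{free-space} Coulomb kernel rather than the Dirichlet one, is finite, convex and lower semicontinuous on $H^{-1}(\Omega)$, so that $\Fcnl_\KS$ is proper convex lsc; that the Green's-function formula for $J$ is justified at the low regularity of elements of $H^{-1}(\Omega)$; and that the target $\rhogs$ is non-interacting ensemble $v$-representable on the bounded domain, which is what guarantees $\subdiff\Fcnl_\KS(\rhogs)\neq\emptyset$. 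I would treat this last point — $v$-representability on $\Omega$ — as the genuine obstacle, and circumvent it by restricting to target densities that arise as KS ground-state densities to begin with.
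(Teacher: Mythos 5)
Your proposal matches the paper's proof: both identify $J$ as the inverse Dirichlet--Laplacian via the Riesz isomorphism for the Poincar\'e-equivalent gradient norm on $H_0^1(\Omega)$, expand $J$ through the Dirichlet Green's function $G=\Phi-\phi_\Omega^\rr$ to obtain Eq.~\eqref{zmpupdate}, and invoke Theorem~\ref{th:v-from-weak-limit} for the $\eps\to 0$ limit. Your additional fixed-point argument identifying the converged $\rho_\star^\eps$ with the proximal density $\Pi_\Fcnl^\eps(\rhogs)$ is a worthwhile detail that the paper leaves implicit under its "assuming convergence" hypothesis, but it does not change the route.
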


Furthermore, by setting $\lambda = (4\pi\eps)^{-1}$, the first term in the right-hand side of Eq.~\eqref{zmpupdate} corresponds \emph{exactly} to the ZMP method Eq.~\eqref{zmpeqs0}, while the second term is a correction that depends on the shape of $\Omega$.

For the choice $X^*=H_0^1(\Omega)$, the Sobolev embedding theorem \cite[4.12,~I.C and III]{adams-book} yields $X^* \subset L^6(\Omega)$ and $\Omega$ bounded allows $X^* \subset L^6(\Omega) \subset L^3(\Omega) \subset L^1(\Omega)$. Thus by duality and denseness of $X^*$ in the $L^p$ spaces above, $X$ includes the usual choice $L^1(\Omega)\cap L^3(\Omega)=L^3(\Omega)$ for the space of densities \cite{Lieb1983}.
Further, it is interesting to note that the same choice of potential space $H_0^1(\Omega)$, with the density as an additional weighting function, already occurred in a study of time-dependent DFT~\cite{penz2011domains}.  

The setting $\Omega=\R^3$ is the one usually presented in molecular DFT. 
In order to formulate the ZMP method on a unbounded domain, we may proceed as follows. Let $X=H^{-1}(\R^3)$ and $X^*=H^1(\R^3)$.
Instead of Poisson's equation we have the \emph{screened} Poisson equation $(-\Delta+\gamma^2)v=\rho$ for a fixed choice $\gamma>0$ with solution, sufficient regularity assumed,
\begin{equation*}
v(\rr)=\frac{1}{4\pi} \int_{\R^3} \frac{\rme^{-\gamma |\rr-\rr'|}}{|\rr-\rr'|} \rho(\rr')\,\rmd\rr',
\end{equation*}
which gives $J(\rho) = v$.
The integral kernel $\frac{1}{4\pi} |\rr|^{-1} e^{-\gamma |\rr|}$ appearing here is known as the  Yukawa potential. We can therefore conclude

\begin{theorem}[ZMP method on unbounded domains]\label{thm:ZMPyuk}
Let $X=H^{-1}(\R^3)$ and $X^*=H^1(\R^3)$. Then the potential update step of Eq.~\eqref{potupdate} reads
\begin{equation*}
    v_{i+1}^\eps (\rr) = \frac{1}{4\pi\epsilon} \int_{\R^3} \frac{ \rho_i^\eps(\rr')-\rhogs(\rr') }{|\rr-\rr'|} \rme^{-\gamma |\rr-\rr'|}\, \rmd\rr'.
\end{equation*}
With this we can proceed like in Theorem~\ref{thm:ZMP}.
\end{theorem}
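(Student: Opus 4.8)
The plan is to reduce Theorem~\ref{thm:ZMPyuk} to two ingredients: first, the observation that the Moreau--Yosida machinery (Theorem~\ref{regprop}) and, crucially, the weak-limit result Theorem~\ref{th:v-from-weak-limit} apply to the pair $X=H^{-1}(\R^3)$, $X^*=H^1(\R^3)$ without any change; and second, an explicit identification of the associated duality mapping $J:X\to X^*$. Since $H^{-1}(\R^3)$ and $H^1(\R^3)$ are Hilbert spaces, they are reflexive and uniformly (hence strictly) convex, and $X^*=\bigl(H^{-1}(\R^3)\bigr)^*=H^1(\R^3)$ is realised through the $L^2$ pairing extended by density; so the hypotheses of Theorems~\ref{regprop} and \ref{th:v-from-weak-limit} are met (for $\Fcnl=\Fcnl_\KS$ one uses that $\Fcnl_\KS$ is convex and lower semicontinuous and that $\subdiff\Fcnl_\KS(\rhogs)\neq\emptyset$ by the assumed $v$-representability), and the only object that differs from the bounded-domain statement Theorem~\ref{thm:ZMP} is $J$.

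Next I would compute $J$. Equip $H^1(\R^3)$ with the (equivalent) inner product $\langle u,w\rangle_{H^1}=\int_{\R^3}(\nabla u\cdot\nabla w+\gamma^2 uw)\,\rmd\rr$ and $H^{-1}(\R^3)$ with the dual norm. The point of the $\gamma^2$ term is exactly that $A:=-\Delta+\gamma^2$ is then an \emph{isometric isomorphism} $H^1(\R^3)\to H^{-1}(\R^3)$ --- most transparently via Plancherel, $A$ being multiplication by $|\mathbf k|^2+\gamma^2$ --- which fails for $\gamma=0$ on the whole space and is the continuum analogue of the boundary term in Theorem~\ref{thm:ZMP}. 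On a Hilbert space the duality mapping into the dual coincides with the Riesz isomorphism, and a one-line Plancherel computation verifies that for $\rho\in H^{-1}(\R^3)$ the element $\xi=A^{-1}\rho\in H^1(\R^3)$ satisfies $\langle\xi,\rho\rangle=\|\rho\|_{H^{-1}}^2=\|\xi\|_{H^1}^2$; hence $J(\rho)=A^{-1}\rho$, i.e.\ $J$ solves the screened Poisson equation $(-\Delta+\gamma^2)v=\rho$. Invoking the standard fact that the fundamental solution of $A$ on $\R^3$ is the Yukawa kernel $(4\pi|\rr|)^{-1}\rme^{-\gamma|\rr|}$ gives $J(\rho)(\rr)=\frac{1}{4\pi}\int_{\R^3}|\rr-\rr'|^{-1}\rme^{-\gamma|\rr-\rr'|}\rho(\rr')\,\rmd\rr'$, which only requires a brief check of mapping properties so that the convolution is literally defined on the spaces at hand. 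I would also note here that each $\gamma>0$ yields an equivalent norm and hence a different but equally legitimate update rule, producing a one-parameter family of ``screened'' ZMP schemes.

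Finally I would assemble the statement: substituting this $J$ into the potential update~\eqref{potupdate}, $v_{i+1}^\eps=\tfrac{1}{\eps} J(\rho_i^\eps-\rhogs)$, reproduces the displayed formula verbatim. The remaining assertions --- hidden in the phrase ``proceed like in Theorem~\ref{thm:ZMP}'' --- are then copied from that proof: for $\Fcnl=\Fcnl_\KS$, solving $\rho_i^\eps\in\superdiff E_{\Fcnl_\KS}(v_i^\eps)$ is the fractional-occupation Kohn--Sham problem with $v_i^\eps$ playing the role of the exchange--correlation potential; if the inner ($i\to\infty$) iteration converges to $v^\eps$, then $v^\eps=-(\Fcnl_\KS^\eps)'(\rhogs)$, and Theorem~\ref{th:v-from-weak-limit} yields the strong limit $v^\eps\to v\in-\subdiff\Fcnl_\KS(\rhogs)$ of minimal norm as $\eps\to0$, which is (a representative of) $\vxc$. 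I would also include the usual check that the density space is not too small: by the Sobolev imbedding $H^1(\R^3)\subset L^2(\R^3)\cap L^6(\R^3)$, so by duality $H^{-1}(\R^3)\supset L^2(\R^3)+L^{6/5}(\R^3)\supset L^1(\R^3)\cap L^3(\R^3)$, noting that on the unbounded domain this inclusion chain is weaker than in the bounded case. The expected main obstacle is organisational rather than analytic: one must fix the $\gamma$-dependent Hilbert structure \emph{before} speaking of $J$, verify carefully that $A^{-1}$ really is the Yukawa convolution \emph{as an operator between $H^{-1}(\R^3)$ and $H^1(\R^3)$} so that the pointwise integral formula is meaningful, and ensure that ground-state degeneracies / non-unique $v$-representability are handled exactly as in Theorem~\ref{thm:ZMP} --- the minimal-norm selection built into Theorem~\ref{th:v-from-weak-limit} being precisely what keeps the $\eps\to0$ limit well defined. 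No genuinely new difficulty arises beyond the bounded-domain case.
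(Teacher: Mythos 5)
Your proposal is correct and follows essentially the same route as the paper: the entire content of the theorem is the identification of the duality map $J$ as the solution operator of the screened Poisson equation $(-\Delta+\gamma^2)v=\rho$, whose kernel on $\R^3$ is the Yukawa potential, after which everything is inherited from Theorem~\ref{thm:ZMP} and Theorem~\ref{th:v-from-weak-limit}. You are in fact slightly more explicit than the paper in one useful respect, namely that the parameter $\gamma$ corresponds to the choice of the equivalent inner product $\int(\nabla u\cdot\nabla w+\gamma^2 uw)\,\rmd\rr$ on $H^1(\R^3)$, which the paper leaves implicit.
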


\subsection{Density-potential inversion on $L^p$ spaces}

Before presenting our numerical results in Section~\ref{sec:num}, we wish to discuss the Hilbert-space setting   
$X=L^2(\Omega)$, $X^*=L^2(\Omega)$, $\Omega \subseteq \R^d$. 
This choice  
implies $J=J^{-1}=\id$. Irrespective of $\Omega$ being bounded or not, $X$ does not include the space $L^1\cap L^3$, so the setting is unsuited for dealing with densities in the continuum. 

This changes in a discrete setting that represents a finite lattice, since for $X=\R^M$, $M$ the number of vertices, all $L^p$ norms, $p\geq1$, are equivalent. 
Using the identity map in Procedure~\ref{proc:ZMP}, we get the na\"{i}ve update scheme
\begin{equation*}
    v_{i+1}^k = (1-\mu)v_i^k + \mu\eps_k^{-1} (\rho_i^k - \rhogs),
\end{equation*}
which tells us to choose a positive (repulsive) potential where the current density is larger than the target density and a negative (attractive) one if it is smaller.
Similarly, Procedure~\ref{proc:DIFF} is just
\begin{equation*}
    v_{i+1} = v_i + \alpha (\rho_i - \rhogs).
\end{equation*}
This update scheme already appeared in the reverse-engineering procedure in Ref.~\cite{karlsson2018disorder} on quantum rings that is itself inspired by the density-potential inversion method of Ref.~\cite{vanleeuwen1994exchange}. The same method was used in Ref.~\cite{kadantsev2004variational}, but with an adaptive parameter $\alpha_i$. The discrete $L^2$ Hilbert-space setting will be further discussed with a numerical experiment in Section~\ref{sec:num} where we compare the different methods.

We also briefly comment on the choice $X=L^{3}(\Omega)$, $X^*=L^{3/2}(\Omega)$, $\Omega \subseteq \R^d$.
This example does \emph{not} consist of Hilbert spaces and it is chosen especially in order to include $L^1\cap L^3$ in $X$ \cite{Lieb1983}. It was previously featured in Ref.~\cite{KSpaper2018}, where the regularized Kohn--Sham iteration was generalized to such density spaces. An example of how the duality mapping looks like for the case $L^p([0,1])$ can be found in Ref.~\cite[Prop.~3.14]{chidume-book}, but it is not a simple expression even in just one dimension ($d=1$).

\subsection{Numerical illustration}
\label{sec:num}

In this section, we consider a numerical example
based on a single system
that serves as a basic proof-of-concept of the density-potential inversion method.
The system under consideration for our numerical exploration is a quantum ring with $M=50$ lattice sites, next-neighbor hopping $\tau=-1$, and a single spinless particle. The space choice is $X=X^*=\R^M$ with the standard $L^2$-norm. We fix a periodic target density $\rhogs$ and solve for the corresponding effective potential by three different methods. The first is just applying a standard non-linear optimization algorithm (BFGS, in its \texttt{scipy} implementation) to Eq.~\eqref{eevarlieb}, while the other two are Procedures~\ref{proc:ZMP} and \ref{proc:DIFF}. The parameter choice is $\mu=0.05$, $\alpha=0.5$, and the relatively short $\eps$-sequence $(1, 0.7, 0.4, 0.1)$. (Note that these parameters cannot be considered universal, a different setting requires a recalibration, so a more careful choice including system size and other parameters seems desirable.) Then all three potentials are again put into the Schrödinger equation and we solve for the ground-state density that is then compared to the target density. Since the potential from the BFGS method turns out to be the most accurate one, the other two potentials are compared to this one. We summarize the results in Fig.~\ref{fig:num_test}.
\begin{figure*}[ht]
    \centering
    \includegraphics[width=1\textwidth]{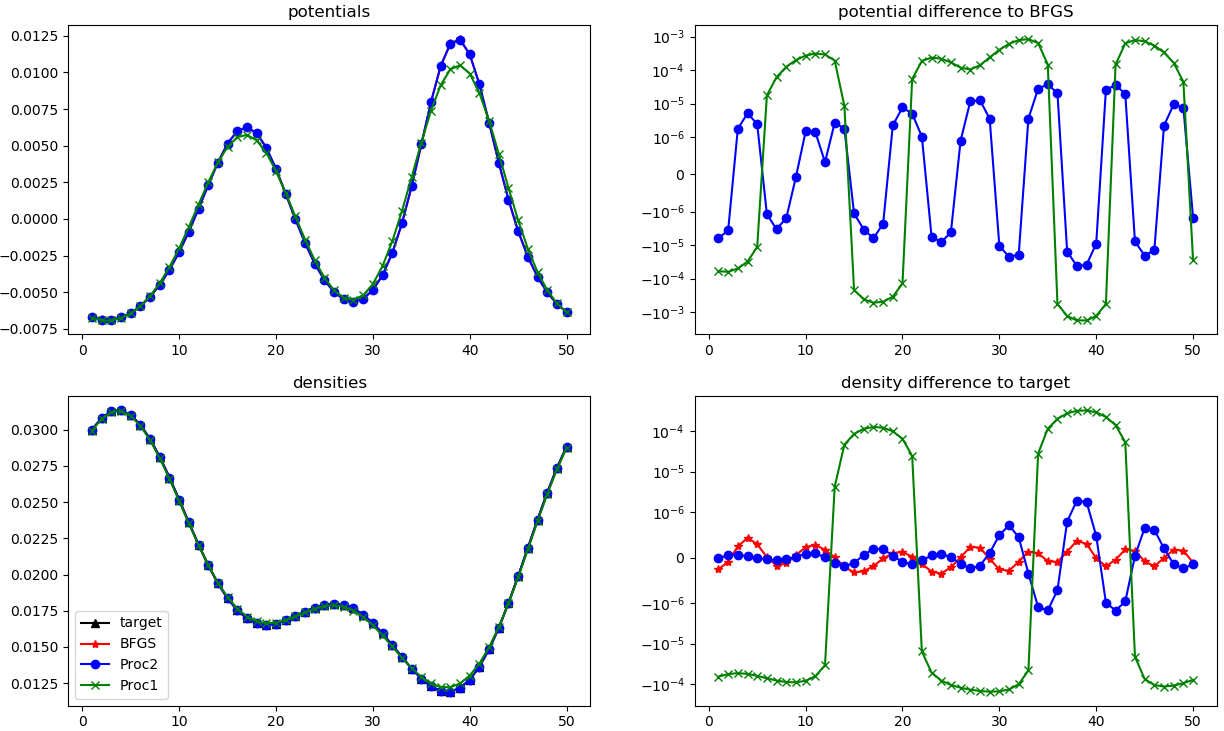}
    \caption{Comparison of different density-potential-inversion methods implemented on a quantum ring with one particle and $M=50$ lattice sites. A target density was given, then different inversion schemes (Procedure~\ref{proc:ZMP}, Procedure~\ref{proc:DIFF} and the standard non-linear optimization algorithm BFGS) were applied to get the potential and to calculate again the ground-state density from there. The left panels show the respective functions, the right panels show the difference to the target density (for densities) and to the BFGS result (for potentials) in a logarithmic scale.}
    \label{fig:num_test}
\end{figure*}
For the chosen tolerance, convergence was achieved in 136 iterations in Procedure~\ref{proc:DIFF} while in every $\eps$-iteration between 43 and 70 steps were required to converge the inner $i$-iteration (with a total of 211 iterations) in Procedure~\ref{proc:ZMP}. This means that within this very specific example, Procedure~\ref{proc:ZMP}, our generalization of the ZMP method, has no benefit compared to the simpler Procedure~\ref{proc:DIFF}, neither regarding iteration speed nor accuracy.
Yet, this finding was only established for the given example and other setups might show a more pronounced difference between the two procedures built upon Moreau--Yosida regularization.
Still, both procedures have comparable overall convergence speed, while the reference optimization algorithm BFGS has a considerably longer runtime (by a factor of 10 in the chosen example).

Note that here the $L^2$-norm was mainly chosen for practicability, since it yields the trivial duality mapping $J=\id$. A different, non-uniform choice of norm could for example focus on specific areas of interest and help to suppress problems with low-density regions.

\section{Summary and future work}
\label{sec:summary}

Starting from the original ZMP method \cite{ZP1993,ZMP1994}, we have formulated a general framework aimed at understanding the density-potential mapping using the Moreau--Yosida regularization scheme applied to DFT. For the convenience of the reader, we reviewed the Moreau--Yosida regularization in detail and summarized its consequences for DFT (Section~\ref{sec:my}).
Next, we formulated our rather general setting that allows to handle a wide variety of density functionals (Section~\ref{sec:M-setting}).
Our main results were stated in Section~\ref{mainsec}, which also contain our reformulation of the ZMP method.
We first proposed an abstract density-potential inversion procedure (Procedure~\ref{proc:ZMP}) by exploiting a property of the derivative of the Moreau--Yosida-regularized density functional (Theorem~\ref{th:v-from-weak-limit}). 
Further, Procedure~\ref{proc:DIFF} yields an even simpler density-potential inversion scheme that was already considered in the literature before \cite{kadantsev2004variational,karlsson2018disorder}. 
We then specialized the abstract procedure to obtain the ZMP method both on bounded domains (Theorem~\ref{thm:ZMP}) and on unbounded domains (Theorem~\ref{thm:ZMPyuk}). In the bounded case, we found that a correction term to the original ZMP method, Eq.~\eqref{zmpeqs0}, is necessary, which accounts for the finite size of the domain. In the unbounded case, the integration kernel of the original ZMP method gets changed to a Yukawa potential. This means that a refinement of the current ZMP inversion algorithms seems entirely possible, e.g., by considering such additional terms or by switching to different, more appropriate spaces for density and potential. Section~\ref{sec:num} finally describes a small proof-of-concept of the different inversion schemes within a numerical experiment.

We stress that the focus of this work is not the immediate development of a new and more efficient practical density-potential inversion method, but to understand the ZMP method and related techniques within a generalized setting based on the Moreau--Yosida regularization. 
In conclusion, our findings provide a framework for devising and analyzing density-potential inversion procedures. It would be also interesting to see how the proposed modifications of the ZMP method perform in practice. Possible modifications are the inclusion of the additional terms derived in Theorems~\ref{thm:ZMP} and \ref{thm:ZMPyuk} or the choice of an altogether different density space $X$, which influences the method through the corresponding duality mapping.

\section{Proofs}\label{sec:proofs}

We first restate Theorem~\ref{th:v-from-weak-limit} in a slightly more general way. The theorem itself is then an immediate consequence if one just remembers that uniform convexity of a space implies it is reflexive and that if a space is reflexive, also its dual space has this property. 

\begin{lemma}
Suppose that $X,X^*$ are both strictly convex and reflexive.
Let $f:X \to \R \cup \{+\infty\}$ be a convex, lower semicontinuous functional and $x \in X$ such that the subdifferential $\subdiff f(x)$ is nonempty. Then there is a unique $\xi \in \subdiff f(x) \subseteq X^*$ with minimal norm that is the weak limit of
\begin{equation}\label{eq:v-from-weak-limit-proofs}
    (f_\eps)'(x) = \eps^{-1} J(x - \proxf(x)) \in \subdiff f(\proxf(x))
\end{equation}
for $\eps \searrow 0$.
Moreover, if $X^*$ is uniformly convex then strong convergence holds.
\end{lemma}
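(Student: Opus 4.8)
\emph{Proof plan.} The plan is to read $(f^\eps)' = \eps^{-1}J(\id - \Pi_f^\eps)$ as the Yosida approximation of the maximal monotone operator $\subdiff f$ and to run the classical minimal-section argument, adapted to the Banach-space duality map. Throughout I write $\rho^\eps := \Pi_f^\eps(x)$ and $v^\eps := (f^\eps)'(x) = \eps^{-1}J(x - \rho^\eps)$, and I will use: $v^\eps \in \subdiff f(\rho^\eps)$ (Eq.~\eqref{eq:v-from-weak-limit-proofs}); $\rho^\eps \to x$ strongly in $X$ as $\eps \searrow 0$ (Theorem~\ref{regprop}); $f(\rho^\eps) \le f^\eps(x) \le f(x)$ from the definition of $f^\eps$ and $\rho^\eps$, hence $f(\rho^\eps) \to f(x)$ by lower semicontinuity of $f$; and, since $\eps v^\eps = J(x - \rho^\eps)$ with $J^{-1}$ homogeneous, $x - \rho^\eps = \eps J^{-1}(v^\eps)$ together with the identity $\langle v^\eps, J^{-1}(v^\eps)\rangle = \|v^\eps\|_{X^*}^2$.

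First I would identify the minimal-norm element. The set $\subdiff f(x)$ is nonempty by hypothesis, convex and weak-$*$ closed, hence (reflexivity of $X$) weakly closed; intersecting it with a ball gives a weakly compact convex set on which the weakly lower semicontinuous norm attains $d := \inf\{\|\xi\|_{X^*} : \xi \in \subdiff f(x)\}$, and strict convexity of $X^*$ makes the minimizer $\xi$ unique.

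The key step is the a priori bound $\|v^\eps\|_{X^*} \le d$ for all $\eps > 0$. For any $\xi \in \subdiff f(x)$, monotonicity of $\subdiff f$ applied to $v^\eps \in \subdiff f(\rho^\eps)$ and $\xi \in \subdiff f(x)$ gives $\langle v^\eps - \xi, \rho^\eps - x\rangle \ge 0$; substituting $\rho^\eps - x = -\eps J^{-1}(v^\eps)$ and using the identities above, this rearranges to $\|v^\eps\|_{X^*}^2 \le \langle \xi, J^{-1}(v^\eps)\rangle \le \|\xi\|_{X^*}\|v^\eps\|_{X^*}$, so $\|v^\eps\|_{X^*} \le \|\xi\|_{X^*}$ and, taking the infimum over $\xi$, $\|v^\eps\|_{X^*} \le d$. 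With this in hand I would identify the weak cluster points: if $\eps_n \searrow 0$ with $v^{\eps_n} \rightharpoonup w$ (available since $X^*$ is reflexive and the family is bounded), then for each fixed $\sigma \in X$ I pass to the limit in the subgradient inequality $\langle v^{\eps_n}, \sigma - \rho^{\eps_n}\rangle \le f(\sigma) - f(\rho^{\eps_n})$: the left side tends to $\langle w, \sigma - x\rangle$ by weak-times-strong convergence (using $\rho^{\eps_n} \to x$ strongly), and the right side to $f(\sigma) - f(x)$, whence $w \in \subdiff f(x)$. Then $\|w\| \ge d$, while weak lower semicontinuity of the norm and $\|v^{\eps_n}\| \le d$ give $\|w\| \le d$; thus $\|w\| = d$ and $w = \xi$ by uniqueness. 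A routine subsequence argument upgrades this to $v^\eps \rightharpoonup \xi$ as $\eps \searrow 0$.

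For the strong convergence when $X^*$ is uniformly convex, the same estimates force $\|v^{\eps_n}\| \to \|\xi\|$ (squeezed between $\liminf \ge \|\xi\|$ from weak lower semicontinuity and $\le d = \|\xi\|$), and in a uniformly convex space weak convergence together with convergence of norms implies norm convergence (Radon--Riesz property), whence $v^\eps \to \xi$ strongly. I expect the main obstacle to be precisely the a priori bound together with the limit passage in the subgradient inequality: without $\|v^\eps\|_{X^*} \le d$ there is no compactness for the family $(v^\eps)$, and both the existence of the weak limit and the membership $w \in \subdiff f(x)$ rely on combining that bound with the strong convergence $\rho^\eps \to x$. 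Note that strict/uniform convexity is invoked only on $X^*$ (uniqueness of $\xi$ and the Radon--Riesz step), whereas strict convexity of $X$ enters only so that $J^{-1}$ is well defined.
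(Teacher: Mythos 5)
Your argument is correct, but it takes a more self-contained route than the paper. The paper's proof is essentially a citation: it observes that $\subdiff f$ is maximal monotone and delegates both the identity $(f^\eps)'(x)=\eps^{-1}J(x-\proxf(x))$ and the weak convergence to a result on Yosida approximations of maximal monotone operators (Barbu--Precupanu, Prop.~1.146(iv)), adding only the one-line optimality argument $0\in\subdiff f(\proxf(x))-\eps^{-1}J(x-\proxf(x))$ for the set membership and a functional-analysis reference for the existence and uniqueness of the minimal-norm element. You instead unfold the content of that cited proposition from scratch: the a priori bound $\|v^\eps\|_{X^*}\le d$ via plain monotonicity of $\subdiff f$ and the duality-map identities, the identification of weak cluster points by passing to the limit in the subgradient inequality (where, as you note, $\|x-\rho^\eps\|_X=\eps\|v^\eps\|_{X^*}\le\eps d$ already delivers the strong convergence $\rho^\eps\to x$ without appealing to Theorem~\ref{regprop}), uniqueness of $\xi$ from strict convexity of $X^*$, and the Radon--Riesz upgrade under uniform convexity. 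What this buys is a proof that uses only monotonicity (not maximality) of the subdifferential and no external operator-theoretic machinery, at the cost of length; the paper's version is shorter but opaque without the reference. One small fix: the membership $v^\eps\in\subdiff f(\rho^\eps)$ is itself part of the lemma's claim, so you should not quote the lemma's own displayed equation as its source; derive it from the first-order optimality condition of the proximal minimization (as the paper does) or cite Theorem~\ref{regprop}~(vii), which states exactly this equivalence.
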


\begin{proof}
Since $f$ is convex and lower semicontinuous, its subdifferential is a maximal monotone operator from $X$ to $X^*$ \cite[Th.~2.43]{Barbu-Precupanu}. Then the first equality and weak convergence follow directly from a result for maximal monotone operators \cite[Prop.~1.146~(iv)]{Barbu-Precupanu}.
We will just add a note about the notation in the reference for the reader's orientation. In Ref.~\cite{Barbu-Precupanu}, the parameter $\eps$ is replaced by $\lambda$, the duality mapping $J$ is called $f$, and the subdifferential $\subdiff f$ is the operator $A$, while the gradients $ (f^\eps)'$ are $A_\lambda$. The connection with regularization is made in Ref.~\cite[Sec.~2.2.3]{Barbu-Precupanu}. That the density is in the domain of the subdifferential means that the subdifferential is not empty at this point. That there exists a unique element with minimal norm in $\subdiff f$ follows from a basic theorem of functional analysis \cite[Cor.~5.1.19]{megginson-book}.

We finally show the set membership in Eq.~\eqref{eq:v-from-weak-limit-proofs}. Taking the definition of the proximal mapping as the minimizer in Eq.~\eqref{eq:MY-def} and translating that into the condition that the subdifferential includes zero at the minimum, we have
\begin{equation*}
    \subdiff f(\proxf(x)) - \eps^{-1} J(x - \proxf(x)) \ni 0.
\end{equation*}
Here we first used that the subdifferential is additive for convex functions and then that $\subdiff \frac{1}{2}\|\cdot\|_X^2 = J$ \cite[Ex.~2.32]{Barbu-Precupanu}.
\end{proof}

\begin{proof}[Proof of Theorem~\ref{thm:ZMP}]
As usual $X^*=H^1_0(\Omega)$ is the Sobolev space with norm $\|v\|_{X^*} = (\|v\|_{L^2}^2 + \|\nabla v\|_{L^2}^2)^{1/2}$ (here $\nabla$ is the weak gradient operator) and zero trace on $\partial\Omega$. The Poincar\'e inequality \cite[6.30]{adams-book} gives an equivalent norm $\|v\|_{X^*} = \|\nabla v\|_{L^2}$ that we will henceforth choose. The dual space $X=H^{-1}(\Omega)$ consists of distributions as explained in Ref.~\cite[3.12]{adams-book}. Now since with the inverse duality mapping and a potential $v\in X^*$ it must hold $\langle v,J^{-1}(v) \rangle = \|v\|^2_{X^*} = \|\nabla v \|_{L^2}^2 =\langle v,-\Delta v\rangle,$ we just get $J^{-1} = -\Delta_{\mathrm{D}}$, the Dirichlet-Laplace operator on $\Omega$. (This also corresponds to the Riesz--Fr\'echet map since we are in the Hilbert space setting.) To obtain $v=J(\rho)$ we must thus solve $-\Delta v = \rho$, $v|_{\partial\Omega}=0$, i.e., Poisson's equation with homogeneous Dirichlet boundary conditions.
If $\rho$ is continuous and we assume a twice continuously-differentiable solution $v$ to exist, then this solution is given by an integral involving the Green function for the domain $\Omega$ \cite[§2.2.4]{evans-book},
\begin{equation*}
    v(\rr)=\int_\Omega \rho(\rr') G(\rr,\rr')\,\rmd\rr'.
\end{equation*}
This Green function can be expressed as $G(\rr,\rr') = \Phi(\rr-\rr')-\phi_\Omega^\rr(\rr')$ with $\Phi(\rr)=1/(4\pi|\rr|)$ the fundamental solution of Laplace's equation and $\phi_\Omega^\rr(\rr')$ the corrector function solving $\Delta'\phi_\Omega^\rr(\rr')=0$ on $\Omega$ with boundary condition $\phi_\Omega^\rr(\rr') = \Phi(\rr'-\rr)$ on $\partial\Omega$. Inserting this into Eq.~\eqref{potupdate} gives the stated formula. The convergence to the exchange-correlation potential is then an application of Theorem~\ref{th:v-from-weak-limit}.
\end{proof}

\begin{proof}[Proof of Theorem~\ref{thm:ZMPyuk}] The only difference between the previous proof is a change in the duality map (because the function space has been changed). The usual norm of $X^*=H^{1}(\R^3)$ is $\|v\|_{X^*} = (\|v\|_{L^2}^2 + \|\nabla v\|_{L^2}^2)^{1/2}$ but for any $\gamma>0$ one has $\|v\|_{X^*} = (\gamma^2\|v\|_{L^2}^2 + \|\nabla v\|_{L^2}^2)^{1/2}$ as an equivalent norm. Then the condition for the duality map is $\langle v,J^{-1}(v) \rangle = \gamma^2\|v\|_{L^2}^2 + \|\nabla v\|_{L^2}^2 = \langle v,\gamma^2 v \rangle + \langle v,-\Delta v \rangle$ and thus $J^{-1} = -\Delta + \gamma^2$. Consequently, application of $J$ means solving the screened Poisson equation.
\end{proof}

\section*{Data Availability Statement}

The numerical example of Section~\ref{sec:num} was implemented in a small Python script that is part of the public domain and can be found at \small\url{https://github.com/ERC-REGAL/REGAL/}.

\section*{Acknowledgements}
AL and MAC acknowledge funding through ERC StG REGAL under agreement No.\ 101041487, AL and MAC also received partial support from the Norwegian Research Council through Grant Nos.\ 287906 (CCerror) and 262695 (CoE Hylleraas).

\section*{References}

%

\end{document}